\documentclass[letterpaper, 10 pt, conference]{ieeeconf}
\IEEEoverridecommandlockouts
\usepackage[utf8]{inputenc}
\usepackage[T1]{fontenc}
\usepackage{amsmath,amssymb,amsfonts}

\usepackage{amsthm}
\usepackage{bm}
\usepackage{mathtools}
\usepackage{graphicx}
\usepackage{subcaption}
\let\labelindent\relax
\usepackage{enumitem}
\usepackage{cite}
\usepackage{algorithm}
\usepackage{algpseudocode}
\usepackage{hyperref}
\hypersetup{colorlinks=true,linkcolor=red,citecolor=black,urlcolor=black}
\usepackage{xcolor}

\newtheorem{theorem}{Theorem}
\newtheorem{proposition}{Proposition}

\theoremstyle{remark}
\newtheorem{remark}{Remark}
\theoremstyle{definition}
\newtheorem{definition}{Definition}

\newcommand{\R}{\mathbb{R}}

\newcommand{\ii}{\mathrm{i}}
\newcommand{\Supp}{\mathrm{h}} 
\newcommand{\Ell}{E}           
\newcommand{\CEll}{C}          

\title{
	Shaping Energy Exchange with Gyroscopic Interconnections: a Geometric Approach
}
\author{Jasper Juchem$^{1}$ and Mia Loccufier$^{1}$%
\thanks{*This work was not supported by any organization}%
\thanks{$^{1}$Jasper Juchem (\textit{corresponding author}) and Mia Loccufier are with the Dept. of Electromechanical, Systems and Metal Engineering, Ghent University
	{\tt\small \{Jasper.Juchem;Mia.Loccufier\}@UGent.be}}%
}

\begin{document}

	
	\maketitle
	\thispagestyle{empty}
	\pagestyle{empty}
	
	\begin{abstract}
	Gyroscopic interconnections enable redistribution of energy among degrees of freedom while preserving passivity and total energy, and they play a central role in controlled Lagrangian methods and IDA–PBC. Yet their quantitative effect on transient energy exchange and subsystem performance is not well characterised.
	
	We study a conservative mechanical system with constant skew‑symmetric velocity coupling. Its dynamics are integrable and evolve on invariant two‑tori, whose projections onto subsystem phase planes provide a geometric description of energy exchange. When the ratio of normal‑mode frequencies is rational, these projections become closed resonant Lissajous curves, enabling structured analysis of subsystem trajectories.
	
	To quantify subsystem behaviour, we introduce the inscribed‑radius metric: the radius of the largest origin‑centred circle contained in a projected trajectory. This gives a lower bound on attainable subsystem energy and acts as an internal performance measure. We derive resonance conditions and develop an efficient method to compute or certify the inscribed radius without time‑domain simulation.
	
	Our results show that low‑order resonances can strongly restrict energy depletion through phase‑locking, whereas high‑order resonances recover conservative bounds. These insights lead to an explicit interconnection‑shaping design framework for both energy absorption and containment control strategies, while taking responsiveness into account.
	\end{abstract}
	
	
	\section{Introduction}
	Energy-based control provides structure-preserving tools to shape closed-loop dynamics by manipulating storage, exchange, and dissipation while retaining passivity. Two complementary paradigms, controlled Lagrangians (CL) and interconnection-and-damping assignment passivity-based control (IDA--PBC), systematically modify kinetic/potential energies and the interconnection structure to achieve stabilisation and performance objectives without destroying the underlying mechanics \cite{BlochCL1,BlochCL2,Ortega2002,MaschkeVdS1992,vanDerSchaftL2, Woolsey2004}. In both, gyroscopic interconnections, which are skew-symmetric velocity couplings, play a central role: they redistribute energy among degrees of freedom while preserving the total energy and the (port-)Hamiltonian structure \cite{Blankenstein2002,Duindam2009,Willems2007,HillMoylan1976}. 
	
	Despite their widespread use, existing approaches primarily emphasise stability and structural properties, while the effect of gyroscopic interconnections on transient energy exchange and subsystem performance remains poorly quantified. In particular, there is a lack of design-ready metrics that predict how interconnection choices limit or enable energy transfer between subsystems and how resonance constrains attainable performance. Addressing this gap is essential for applications where energy routing, absorption, or containment is a primary objective.
	
	In this paper, we study a minimal conservative two-degree-of-freedom system with constant skew-symmetric velocity coupling, viewed as a canonical closed-loop normal form induced by interconnection shaping. The dynamics are integrable and evolve on invariant two-tori, and subsystem energy exchange can be interpreted geometrically through projections onto subsystem phase planes. This perspective reveals that the attainable transient behaviour is governed by the resonance structure of the modal frequencies, which is directly controlled by the interconnection parameter.
	
	From a control perspective, the interconnection strength provides a tunable design parameter that selects the resonance class and thereby the achievable depth and speed of energy exchange. This leads naturally to two complementary design objectives: absorption, where energy is transferred away from a subsystem, and containment, where a guaranteed energy level is preserved. The resulting trade-off between exchange depth and responsiveness can be interpreted as a Pareto frontier induced by the resonance structure.
	
	Related ideas of tunable coupling and mode interaction appear in engineered	resonators, where coupling is modulated to route energy	between modes. Applications include dynamic electrostatic modulation in gyroscopic ring resonators, frequency-mismatch compensation and mode matching in dual-mass gyroscopes, and analyses that explicitly target energy transfer between drive and sense modes \cite{Zhou2019,Pistorio2021,Hamed2016,CouplingMEMS2021}. These strands motivate a control-oriented quantification of energy routing induced by gyroscopic interconnections. Applications that benefit from targeted energy routing as well include vibration mitigation and energy funnelling toward single actuation points \cite{Juchem2024}.
	
	The contributions of this work are as follows: (i) a geometric characterisation of subsystem energy exchange via invariant-torus projections; (ii) exact resonance conditions and a degeneracy criterion for vanishing inscribed radius; (iii) a computationally efficient framework for certifying the minimal subsystem energy; and (iv) an interconnection-shaping design perspective that explicitly links resonance structure to transient performance. 
	

	\section{Model and Resonant pairs}\label{sec:modelResonantPairs}
	\subsection{Canonical Model}
	Consider
	\begin{equation}\label{eq:model}
		\ddot q + n\, \dot z + q = 0, \qquad \ddot z - n\, \dot q + z = 0,
	\end{equation}
	with state $(q,\dot q,z,\dot z)\in\R^4$ and constant gyroscopic coupling $n\in\R$. 
	System \eqref{eq:model} should be viewed as a canonical closed-loop normal form that arises when energy-shaping methods (controlled Lagrangians / IDA–PBC) introduce a constant skew-symmetric interconnection between two oscillatory coordinates. Define $x=(q,z)^\top$, $p=(\dot q,\dot z)^\top$, and the skew matrix $J=\begin{bmatrix}0&n\\-n&0\end{bmatrix}$. Then \eqref{eq:model} is equivalent to
	\begin{equation}
		\dot x = p,\qquad \dot p = -x - J p.
	\end{equation}
	With the standard symplectic form, the Hamiltonian $H(x,p)=\tfrac12(\|x\|^2+\|p\|^2)$ yields the dynamics above. Moreover, $\dot H = p^\top \dot p + x^\top \dot x = -p^\top J p = 0$ since $J^\top=-J$. Hence, the coupling is called `gyroscopic' in the sense that it is skew-symmetric and conserves total energy while redistributing it between subsystems \cite{ArnoldMMCM,vanDerSchaftL2}. The Hamiltonian of a subsystem is defined as
	\begin{equation}
		H_k(k, \dot{k}) = \frac{1}{2}(k^2 + \dot{k}^2)
	\end{equation} 
	with $k\in \{q, z\}$.
	
	In many multi-DOF mechanical plants, a dominant mode coupled to an auxiliary (virtual) oscillator via interconnection shaping reduces locally to \eqref{eq:model} after modal truncation and normalisation \cite{Juchem2024}. Here, the coupling $n$ is a tunable controller parameter rather than a physical constant. A concrete physical instantiation also arises in electromechanical resonators, where a single-DOF mass–spring mode is coupled to a coil closed through an ideal capacitor, such that the electromagnetic transduction defines a power-preserving port interconnection \cite{Auleley2021}. This interpretation emphasises that \eqref{eq:model} captures a minimal energy-preserving interconnection structure that appears both in physical systems and as a design primitive in control synthesis.

	\subsection{Resonant pairs}
	Introduce $u=q+\ii z$. Then \eqref{eq:model} reduces to
	\begin{equation}\label{eq:u}
		\ddot u - \ii n \dot u + u = 0.
	\end{equation}
	The characteristic equation for $u\sim e^{\lambda t}$ is $\lambda^2 - \ii n \lambda + 1 = 0$, with imaginary roots $\lambda=\ii\Omega_1$ and $\lambda=-\ii\Omega_2$ where we define the \emph{positive} modal frequencies
	\begin{equation}\label{eq:Omegas}
		\Omega_1 := \frac{\sqrt{n^2+4}+n}{2},\quad \Omega_2 := \frac{\sqrt{n^2+4}-n}{2},\quad \Omega_1\Omega_2=1.
	\end{equation}
	Thus the general solution is
	\begin{equation}\label{eq:generalSol}
		u(t)=C_1 e^{\ii\Omega_1 t}+C_2 e^{-\ii\Omega_2 t},
	\end{equation}
	a quasi-periodic flow on invariant two-tori in the full 4D phase space \cite{KatokHasselblatt}.
	
	\begin{definition}[Resonant pair]
		Let $\Omega_1,\Omega_2>0$ be the modal frequencies defined in \eqref{eq:Omegas}. 
		A pair $(\tau,\sigma)\in\mathbb{N}^2$ with $\gcd(\tau, \sigma)=1$, also referred to as being \emph{coprime}, is called a \emph{resonant pair} if
		\[
		\frac{\Omega_1}{\Omega_2}=\frac{\tau}{\sigma} \in\mathbb{Q}.
		\]
	\end{definition}
	Without loss of generality, we assume that $\tau > \sigma$.
	
	\begin{proposition}[Resonance condition]\label{prop:resonance}
		The projected motions on $(q,\dot q)$ and $(z,\dot z)$ are periodic (closed Lissajous curves) if and only if $(\Omega_1, \Omega_2)$ is a resonant pair. In terms of $n$ this is equivalent to
		\begin{equation}\label{eq:n_res}
			n^2=\frac{(\tau-\sigma)^2}{\tau\sigma}.
		\end{equation}
	\end{proposition}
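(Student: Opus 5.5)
The plan is to work from the explicit general solution \eqref{eq:generalSol} and split the statement into two parts: the periodicity equivalence, and the algebraic translation into a condition on $n$.

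\textbf{Step 1: write the projections explicitly.} Write $C_1=a_1e^{\ii\phi_1}$ and $C_2=a_2e^{\ii\phi_2}$. Then
\[
q(t)=\operatorname{Re}u=a_1\cos(\Omega_1t+\phi_1)+a_2\cos(\Omega_2t-\phi_2),
\]
and $z(t)=\operatorname{Im}u$ is the analogous sum of sines. Each projected curve $t\mapsto(q,\dot q)$ or $(z,\dot z)$ is therefore a superposition of two harmonics with frequencies $\Omega_1,\Omega_2>0$.

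\textbf{Step 2 (``if'').} If $\Omega_1/\Omega_2=\tau/\sigma$ with $\gcd(\tau,\sigma)=1$, set
\[
T=\frac{2\pi\tau}{\Omega_1}=\frac{2\pi\sigma}{\Omega_2}.
\]
Both harmonics are $T$-periodic, so $u$ and $\dot u$ are $T$-periodic, and hence both projections are closed curves.

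\textbf{Step 3 (``only if'').} This is the main obstacle, because the equivalence can only hold for generic initial data. If $C_1=0$ or $C_2=0$, the projection is a circle regardless of the frequency ratio. So I would state and prove the converse for $a_1a_2\neq0$, which is the torus case; in the degenerate case the torus collapses to a periodic orbit.

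Assume $q$ is $T$-periodic for some $T>0$. The functions $\cos\Omega_1t$, $\sin\Omega_1t$, $\cos\Omega_2t$, $\sin\Omega_2t$ are linearly independent when $\Omega_1\neq\Omega_2$; the case $\Omega_1=\Omega_2$ corresponds to $n=0$ and is handled separately. Comparing $q(t+T)=q(t)$ coefficient-wise, using uniqueness of Fourier/almost-periodic expansions, gives $e^{\ii\Omega_1T}=1$ and $e^{\ii\Omega_2T}=1$. Hence $\Omega_1T=2\pi\tau$ and $\Omega_2T=2\pi\sigma$ for some positive integers $\tau,\sigma$, so the ratio is rational; reducing to lowest terms yields a coprime pair. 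The same argument applies to $z$.

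\textbf{Step 4: the condition on $n$.} From \eqref{eq:Omegas}, $\Omega_1\Omega_2=1$ and $\Omega_1-\Omega_2=n$. Put $\Omega_1=\tau s$ and $\Omega_2=\sigma s$. Then $s^2\tau\sigma=1$, so
\[
n^2=(\Omega_1-\Omega_2)^2=s^2(\tau-\sigma)^2=\frac{(\tau-\sigma)^2}{\tau\sigma}.
\]
For the converse, given this value of $n^2$, compute
\[
\frac{(\Omega_1+\Omega_2)^2}{\Omega_1\Omega_2}=n^2+4=\frac{(\tau+\sigma)^2}{\tau\sigma}.
\]
Let $r=\Omega_1/\Omega_2$. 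Then $r+1/r+2=\tau/\sigma+\sigma/\tau+2$, so $r\in\{\tau/\sigma,\sigma/\tau\}$. The sign of $n$, together with the convention $\tau>\sigma$ for $n>0$, selects the root; for $n<0$ the roles of $\tau$ and $\sigma$ are swapped. Finally, the case $n=0$ corresponds to $\tau=\sigma=1$, which is consistent with the formula.
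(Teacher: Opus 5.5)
Your proposal is correct. It reaches the result by a more self-contained route than the paper's proof.

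\textbf{Periodicity equivalence.} The paper settles this in one line by citing the classical fact that a linear flow on the $2$-torus is periodic iff its frequency ratio is rational. You instead prove it directly from the explicit solution, using linear independence of $e^{\pm\ii\Omega_1 t}$ and $e^{\pm\ii\Omega_2 t}$ for $\Omega_1\neq\Omega_2$. This gains precision at a point the citation glosses over. The cited fact concerns the flow on the torus, while the proposition concerns its projections onto $(q,\dot q)$ and $(z,\dot z)$. Going from a periodic projection back to a rational ratio needs both modal amplitudes to be nonzero. If $C_1C_2=0$, the torus collapses to a circle and the projection is periodic for every $n$. So the ``only if'' direction genuinely needs your hypothesis $a_1a_2\neq 0$. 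In the paper's setting this holds automatically: for the impulse class $\mathcal{D}$ one has $C_2=-C_1=\ii\rho_0\neq 0$ whenever $\dot q_0\neq 0$.

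\textbf{Condition on $n$.} The paper solves $\frac{\sqrt{n^2+4}+n}{\sqrt{n^2+4}-n}=\frac{\tau}{\sigma}$ for $n$; after squaring, this gives $4\tau\sigma n^2=4(\tau-\sigma)^2$. You use the identities $\Omega_1\Omega_2=1$ and $\Omega_1-\Omega_2=n$, which is cleaner. You also treat the converse explicitly. Knowing $n^2$ fixes $\Omega_1/\Omega_2$ only up to inversion, and the sign of $n$ selects $\tau/\sigma$ or $\sigma/\tau$. You note that $n=0$ corresponds to the pair $(1,1)$, which the paper's convention $\tau>\sigma$ excludes.

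\textbf{Trade-off.} The paper's version is shorter and fits the invariant-torus language used throughout. Yours makes explicit the nondegeneracy and sign conventions under which the stated equivalence actually holds.
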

	\begin{proof}
		The ratio of basic frequencies is rational if and only if the 2-torus flow is periodic when projected to time \cite[Proposition 1.5.1, p. 33]{KatokHasselblatt}. Solving $\frac{\Omega_1}{\Omega_2}=\frac{\sqrt{n^2+4}+n}{\sqrt{n^2+4}-n}=\frac{\tau}{\sigma}$ for $n$ yields \eqref{eq:n_res}.
	\end{proof}
	We discern two classes of resonant pairs:
	\begin{definition}[Low-order resonance pair]
 		A resonant pair $(\tau, \sigma)$ is called \emph{low-order} if it belongs to a fixed finite set
 		\[
 		\mathcal{R}_{\mathrm{low}}
 		=\bigl\{(\tau,\sigma)\in\mathbb{N}^2:\gcd(\tau,\sigma)=1,\; \tau+\sigma\le M\bigr\},
 		\]
 		for some $M\in\mathbb{N}$.
	 	Low-order resonance pairs correspond to strong phase locking between the modal components, a phenomenon represented by the characteristically wide Arnold tongues associated with small ratios \cite[Section 3.2.4]{Pikovsky}.
	 \end{definition}
	 
	 \begin{definition}[High-order resonance pair]
	 	A sequence of resonant pairs $\{(\tau_k,\sigma_k)\}$ is called \emph{high-order} if
	 	\[
	 	\tau_k + \sigma_k \to\infty\quad\text{and}\quad
	 	\frac{\tau_k}{\sigma_k}\;\to\;1,
	 	\text{ with}\quad
	 	(\tau_k,\sigma_k)\neq(\tau,\tau),
	 	\]
	 	equivalently if
	 	\[
	 	\tau_k,\sigma_k\to\infty
	 	\quad\text{and}\quad
	 	|n|=\frac{|\tau_k-\sigma_k|}{\sqrt{\tau_k \sigma_k}}\to 0.
	 	\]
	 	Hence, high-order resonance pairs correspond to weak phase locking and asymptotically dense sampling of the invariant torus \cite{Pikovsky}.
	 \end{definition}

	\section{Subsystem Projection: Lissajous curves}\label{sec:subsystem}
	For the impulse disturbance class
	\begin{equation}\label{eq:IC}
		\mathcal{D} := \{(q,\dot q,z,\dot z)(0)=(0,\,\dot{q}_0,\,0,\,0): |\dot{q}_0| \leq D\},
	\end{equation}
	with $D\in\mathbb{R}_{>0}$,
	solving \eqref{eq:generalSol} for $C_1,C_2$ gives $C_2=-C_1=\ii\rho_0$ with
	\begin{equation}\label{eq:rho0}
		\rho_0:=\frac{\dot{q}_0}{\Omega_1+\Omega_2}=\frac{\dot{q}_0}{\sqrt{n^2+4}}.
	\end{equation}
	Hence
	\begin{equation}\label{eq:impulseResp-explicit} 
	\begin{cases}
		q(t)&=\rho_0\big(\sin(\Omega_1 t) + \sin(\Omega_2 t)\big),\\
		z(t)&=\rho_0\big(\cos(\Omega_2 t) - \cos(\Omega_1 t)\big),\\
		\dot q(t)&=\rho_0\big(\Omega_1\cos(\Omega_1 t)+\Omega_2\cos(\Omega_2 t)\big),\\
		\dot z(t)&=\rho_0\big(\Omega_1\sin(\Omega_1 t)-\Omega_2\sin(\Omega_2 t)\big).
	\end{cases}
	\end{equation}
	
	We focus now on the projection on the $(q,\dot{q})$ phase space. Similar results can be obtained for the projection on the $(z, \dot{z})$ phase space, due to structure of system \eqref{eq:model}. 	
	\begin{proposition}[Convex envelope on $(q,\dot q)$]\label{prop:convexEnvelope}
		Let $(q,\dot q)$ evolve under \eqref{eq:model} with the impulse disturbance \eqref{eq:IC}, then the boundary set is given by
		\begin{equation}\label{eq:boundary}
			\frac{x_{\partial}(\phi)}{\rho_0}
			=
			\frac{(\cos\phi,\;\Omega_1^2\sin\phi)}{\sqrt{\cos^2\phi+\Omega_1^2\sin^2\phi}}
			+\frac{(\cos\phi,\;\Omega_2^2\sin\phi)}{\sqrt{\cos^2\phi+\Omega_2^2\sin^2\phi}},
		\end{equation}with $\phi\in[0,2\pi]$.
	\end{proposition}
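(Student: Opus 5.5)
The plan is to write the $(q,\dot q)$ projection as a sum of two independent elliptic motions and to identify the envelope as the boundary of their Minkowski sum, using support functions. From \eqref{eq:impulseResp-explicit}, with the torus angles $\theta_1=\Omega_1 t$ and $\theta_2=\Omega_2 t$, we have
\[
\frac{(q,\dot q)}{\rho_0}=(\sin\theta_1,\;\Omega_1\cos\theta_1)+(\sin\theta_2,\;\Omega_2\cos\theta_2).
\]
Each summand traces the ellipse $\Ell_k=\{(\cos a,\Omega_k\sin a)\}$, with semi-axes $1$ and $\Omega_k$. Here I reparametrise by $a=\pi/2-\theta_k$. If the angles $(\theta_1,\theta_2)$ range independently over the whole two-torus, the image is exactly the Minkowski sum $\rho_0(\mathrm{conv}\,\Ell_1\oplus\mathrm{conv}\,\Ell_2)$ of the two filled ellipses; in particular it is convex. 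By Proposition~\ref{prop:resonance}, in the non-resonant case the orbit is dense on the torus, so its closure has this set as its envelope. In the resonant case I would interpret the statement as the envelope of the projection of the whole invariant torus, i.e. of the family of orbits obtained by shifting the relative phase, which is the same set.

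The key step is standard convex geometry. The support function is additive under Minkowski sums, $\Supp_{A\oplus B}=\Supp_A+\Supp_B$. Moreover, for smooth strictly convex bodies, the boundary point of $A\oplus B$ with outward normal $(\cos\phi,\sin\phi)$ is the sum of the unique boundary points of $A$ and of $B$ that have that same normal. So it suffices to compute, for each ellipse, the maximiser of $a\mapsto\cos\phi\cos a+\Omega_k\sin\phi\sin a$. Setting $N_k=\sqrt{\cos^2\phi+\Omega_k^2\sin^2\phi}$, this gives $\cos a=\cos\phi/N_k$ and $\sin a=\Omega_k\sin\phi/N_k$. The support point of $\Ell_k$ is therefore $(\cos\phi,\;\Omega_k^2\sin\phi)/N_k$. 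Summing over $k=1,2$ and scaling by $\rho_0$ yields \eqref{eq:boundary}. As $\phi$ runs over $[0,2\pi]$ the outward normal covers every direction, so the whole boundary is traced.

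Finally I would check the ingredients that make this rigorous. Each $\Ell_k$ is smooth and strictly convex because $\Omega_k>0$. Hence the support points are unique and the map $\phi\mapsto x_\partial(\phi)$ is a continuous closed curve. Every support point is attained by some torus point $(\theta_1,\theta_2)$, so the curve is the actual boundary and not just an outer bound.

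The main obstacle is the attainment issue in the resonant case. There $(\theta_1,\theta_2)$ lies on a closed line on the torus, and the pair of phases required for a given normal $\phi$ need not lie on the single orbit starting from \eqref{eq:IC}. I would handle this either by stating the result for the torus projection, as above, or by bounding the gap: show that the orbit comes within a distance $O(1/(\tau+\sigma))$ of every required phase pair. This matches the distinction between low- and high-order resonances that the paper develops next.
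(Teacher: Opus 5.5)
Your route is the paper's: you write $(q,\dot q)$ as a sum of two elliptic motions, pass to the Minkowski sum $\rho_0(E_{\Omega_1}\oplus E_{\Omega_2})$ of the filled ellipses, and use additivity of support functions. Your explicit support point $(\cos\phi,\Omega_k^2\sin\phi)/N_k$ and the additivity of faces supply exactly the step the paper compresses into ``which yields''.

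Your resonance caveat is justified, and the paper's proof passes over it. The paper calls $\rho_0(C_{\Omega_1}+C_{\Omega_2})$ the ``exact projected set'', which tacitly treats $\theta_1,\theta_2$ as independent. That is true only for the whole torus, or for the closure of a non-resonant orbit. For a resonant pair, the impulse orbit reaches the boundary of $\rho_0(E_{\Omega_1}\oplus E_{\Omega_2})$ in only finitely many directions, so \eqref{eq:boundary} is then only an outer envelope. For example, with $(\tau,\sigma)=(4,3)$, reaching $x_\partial(0)=(2\rho_0,0)$ would require $4\theta\equiv 3\theta\equiv\pi/2 \pmod{2\pi}$, which is impossible. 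Your torus-projection reading is therefore the correct form of the statement. The $O(1/(\tau+\sigma))$ phase-distance estimate would only quantify how close the orbit comes to the envelope; it cannot prove the formula for the orbit itself.

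A minor point concerns your aside that the torus image already equals the filled set. It is true here: for $x\in E_{\Omega_1}\oplus E_{\Omega_2}$, the intersecting bodies $x-E_{\Omega_1}$ and $E_{\Omega_2}$ have the same $q$-width $2$. So neither lies in the other's interior, and their boundaries must meet. The claim needs that argument, but you can drop it, since only the convex hull enters.
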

	
	\begin{proof}
		Let $\theta_i=\Omega_i t$, and define the ellipse boundary and filled ellipse respectively
		\begin{align}
			\CEll_\omega &:=\{(\sin\theta,\;\omega\cos\theta):\theta\in[0,2\pi]\},\\
			\Ell_\omega &:=\{(x,y):\,x^2+(y/\omega)^2\le 1\}.
		\end{align}
		From \eqref{eq:impulseResp-explicit}, write
		\begin{equation}
			(q,\dot q)(t)
			=\rho_0\bigl(\sin\theta_1,\Omega_1\cos\theta_1\bigr)
			+\rho_0\bigl(\sin\theta_2,\Omega_2\cos\theta_2\bigr),
		\end{equation}
		which yields the exact projected set $\mathcal{S}
		  	=\rho_0\,(\CEll_{\Omega_1}+\CEll_{\Omega_2}).$
		Taking convex hulls $\mathrm{co}(A+B)=\mathrm{co}(A)\oplus\mathrm{co}(B)$ together with $\mathrm{co}(C_\omega)=E_\omega$, where $E_\omega=\{(x,y):x^2+(y/\omega)^2\leq1\}$, gives
		$\operatorname{co}(\mathcal{S})=\rho_0\,(\Ell_{\Omega_1}\oplus\Ell_{\Omega_2}),
		$ with $\mathrm{\oplus}$ the Minkowski addition \cite{Schneider2014}. 
		Using additivity of two ellipsis' support functions,
		\begin{equation}
			\Supp_{\operatorname{co}(\mathcal{S})}(u(\phi)) = \rho_0\left(\Supp_{\Ell_{\Omega_1}}(u(\phi)) + \Supp_{\Ell_{\Omega_2}}(u(\phi))\right),
		\end{equation}
		which yields the parametric boundary \eqref{eq:boundary}.
		

	\end{proof}
	
	\begin{remark}[Ellipticity]\label{rem:n0ellipse}
		The curve \eqref{eq:boundary} is an ellipse \emph{iff} $\Omega_1=\Omega_2=1$, i.e.\ $n=0$.  
		Moreover, for $n\neq 0$ the boundary is the Minkowski addition of two unequal ellipses and is never itself an ellipse.  
	\end{remark}
	
	A similar analysis can be done for $(z, \dot{z})$. In Figure \ref{fig:boundary}, the boundary set and the maximal radius is plotted for $n=2$ and $n={1}/{\sqrt{12}}$, which is the resonant pair $(\tau, \sigma)=(4, 3)$ according to \eqref{eq:n_res}. Notice that in Figure \ref{fig:reachable_n2} for $n=2$, in absence of a resonant pair, the solution set is dense, and the envelope is clearly not elliptical. Figure \ref{fig:reachable_n1_sqrt12} shows a closed Lissajous curve, thus the set is sparse.
	\begin{figure}[t!]
		\centering
		\begin{subfigure}{0.45\columnwidth}
			\centering
			\includegraphics[width=\textwidth]{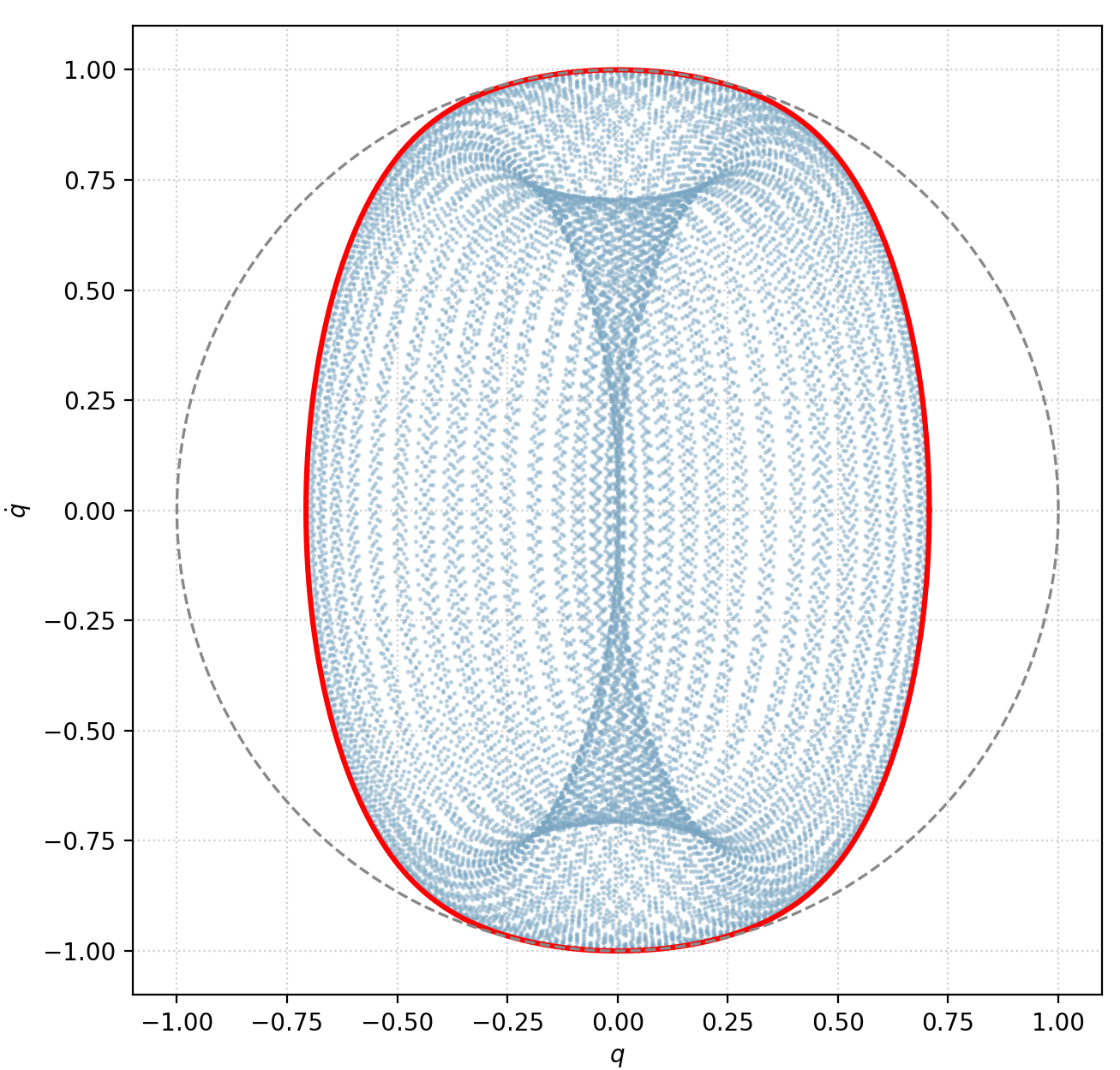}
			\caption{}
			\label{fig:reachable_n2}
		\end{subfigure}
		\begin{subfigure}{0.45\columnwidth}
			\centering
			\includegraphics[width=\textwidth]{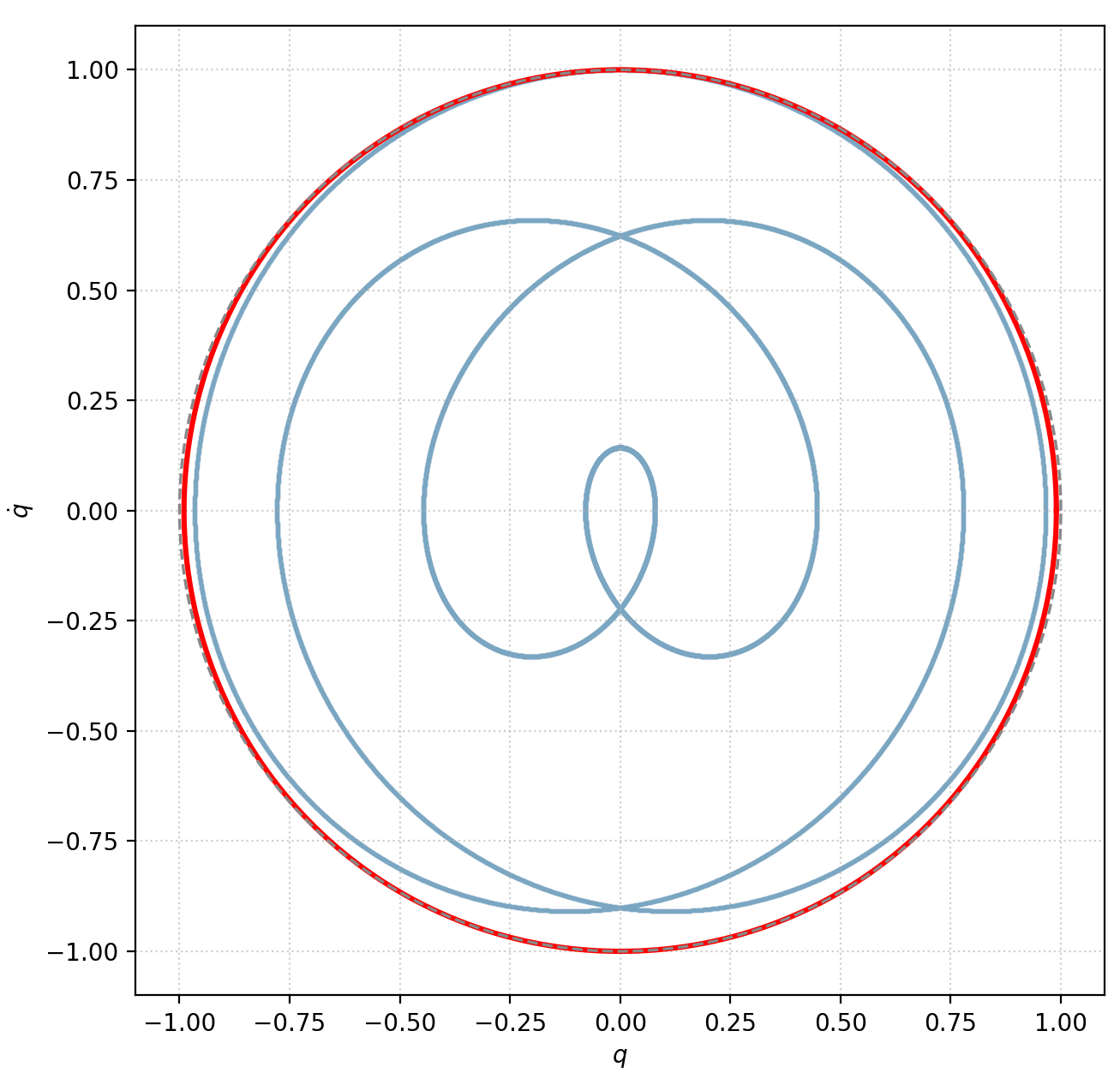}
			\caption{}
			\label{fig:reachable_n1_sqrt12}
		\end{subfigure}
		
		\caption{The phase plane $(q(t), \dot{q}(t))$, $t\in[0, \frac{200\pi}{n}]$, of \eqref{eq:model} for an impulse response ($\dot{q}_0=1$) with (a) $n=2$ and (b) $n=\frac{1}{\sqrt{12}}$. The boundary set $x_\partial$, given by \eqref{eq:boundary} (full line, red), and maximal radius (dotted line) are plotted as well.}
		\label{fig:boundary}
	\end{figure}
	
	\subsection{Resonant Lissajous curves}
	At a resonant pair $(\tau, \sigma)$, define $\theta:=\alpha t$, so $q,\dot q$ depend on $\theta$ only through harmonics $\tau\theta$ and $\sigma\theta$. From $\Omega_1\Omega_2=1$ follows that $\Omega_1=\Omega_2^{-1}=\sqrt{\frac{\tau}{\sigma}}$. So $(\tau\alpha, \sigma\alpha) = (\Omega_1, \Omega_2)$ iff $\alpha=\frac{1}{\sqrt{\tau\sigma}}$. Consider now the impulse response \eqref{eq:impulseResp-explicit}:
	\begin{equation}\label{eq:q_theta}
		\begin{cases}
		q(\theta)=\rho_0\left(\sin(\tau\theta) + \sin(\sigma\theta)\right),\\
		\dot{q}(\theta)=\rho_0\left(\sqrt{\frac{\tau}{\sigma}}\cos(\tau\theta) + \sqrt{\frac{\sigma}{\tau}}\cos(\sigma\theta)\right),
		\end{cases}
	\end{equation}
	and the squared radius
	\begin{equation}\label{eq:R_init}
		R(\theta)= r(\theta)^2 = q(\theta)^2+\dot q(\theta)^2.
	\end{equation}
	
	\subsection{Resonant inscribed circle}
	Define the resonant inscribed radius
	\begin{equation}
		r_{res} := \min_{\theta\in[0,2\pi[} \sqrt{R(\theta)}.
	\end{equation}
	 The inscribed circle tells us the minimal amount of energy $H_{q, min} = \frac{1}{2}r_{res}^2$ that remains in the subsystem $(q, \dot{q})$ at all time. This is an interesting metric to evaluate or design a system if the coupling parameter  $n$ of \eqref{eq:model} is tunable. The goal of this section is to understand the inscribed circle in the case of a resonant pair. Note that in non-resonant cases, the solution is dense in the phase space, as shown in Figure \ref{fig:reachable_n2}, so it tends to have a very small inscribed circle.
	
	\begin{theorem}[Degenerate resonant inscribed circle]
		Let $(\tau, \sigma)$ be a resonant pair and consider the resonant parametrisation \eqref{eq:q_theta}. Then
		\[
		r_{res}=0 \quad\Leftrightarrow\quad \delta:=\tau - \sigma\equiv 2 \pmod{4}.
		\]
		Moreover, in that case $R(\pi/2)=0$, so the minimum is attained at
		$\theta^\star=\pi/2 \ (\mathrm{mod}\ \pi)$.
	\end{theorem}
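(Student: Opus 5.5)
The plan is to reduce $r_{res}=0$ to the existence of a common zero of $q(\theta)$ and $\dot q(\theta)$ in \eqref{eq:q_theta}. Since $R$ is continuous and $2\pi$-periodic, its minimum is attained, so $r_{res}=0$ holds iff some $\theta$ satisfies both equations. Dropping the factor $\rho_0\neq 0$ and multiplying the second equation by $\sqrt{\tau\sigma}$, the conditions are
\[
\sin(\tau\theta)+\sin(\sigma\theta)=0,\qquad \tau\cos(\tau\theta)+\sigma\cos(\sigma\theta)=0 .
\]

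\textbf{Step 1: force both cosines to vanish.} From the first equation, $\sin\tau\theta=-\sin\sigma\theta$, hence $\cos^2\tau\theta=\cos^2\sigma\theta$, i.e.\ $\cos\tau\theta=\pm\cos\sigma\theta$. Substituting into the second equation gives $(\tau\pm\sigma)\cos\sigma\theta=0$. Since $\tau>\sigma\ge 1$, both $\tau+\sigma$ and $\tau-\sigma$ are nonzero. Therefore $\cos\sigma\theta=0$, and then $\cos\tau\theta=0$ as well. Consequently $\sin\sigma\theta=\pm1$, and the first equation requires $\sin\tau\theta=-\sin\sigma\theta$. Conversely, any $\theta$ with $\cos\sigma\theta=\cos\tau\theta=0$ and opposite sines solves the system. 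So the problem becomes purely arithmetic: find $k,m\in\mathbb{Z}$ with
\[
\sigma\theta=\tfrac{\pi}{2}+k\pi,\qquad \tau\theta=\tfrac{\pi}{2}+m\pi,\qquad m-k \text{ odd},
\]
where the parity condition encodes $(-1)^m=-(-1)^k$.

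\textbf{Step 2: number-theoretic reduction via coprimality.} Eliminating $\theta$ gives $\tau(2k+1)=\sigma(2m+1)$. Because $\gcd(\tau,\sigma)=1$, we get $\sigma\mid 2k+1$ and $\tau\mid 2m+1$, with the same odd cofactor. This forces $\tau$ and $\sigma$ both odd, with $2k+1=\sigma(2j+1)$ and $2m+1=\tau(2j+1)$, i.e.\ $\theta=(2j+1)\pi/2$. In particular, if one of $\tau,\sigma$ is even there is no common zero and $r_{res}>0$. When both are odd, $\delta$ is even. Then
\[
m-k=\tfrac{(2j+1)(\tau-\sigma)}{2},
\]
which is odd iff $\delta/2$ is odd, i.e.\ $\delta\equiv 2\pmod 4$. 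This parity check is independent of $j$. For the converse, when $\delta\equiv2\pmod 4$ both $\tau,\sigma$ are odd, so $\theta=\pi/2$ (taking $j=0$) satisfies all conditions, giving $R(\pi/2)=0$. The zeros are exactly $\theta^\star=\pi/2+j\pi$, i.e.\ $\pi/2 \pmod \pi$.

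\textbf{Main obstacle.} I expect the delicate part to be the bookkeeping in Step 2. It must correctly use coprimality to exclude the mixed-parity case, and it must track the parity of $m-k$, which encodes the sign condition on the sines. The analytic Step 1 is routine once one notices that squaring the sine relation turns the $\dot q$ equation into a linear one in $\cos\sigma\theta$.
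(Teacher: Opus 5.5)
Your proof is correct and complete, but it is organised differently from the paper's. The paper uses the product-to-sum factorisation $\sin\tau\theta+\sin\sigma\theta=2\sin\bigl(\tfrac{\tau+\sigma}{2}\theta\bigr)\cos\bigl(\tfrac{\tau-\sigma}{2}\theta\bigr)$ and treats the two branches separately. In each branch it shows $\cos\sigma\theta=0$ and combines this with the branch condition into a Diophantine equation: $\delta(1+2k)=2\sigma(1+2m)$ in case (a), and $(\tau+\sigma)(1+2m)=4\sigma k$ in case (b). It then reads off $\delta\equiv 2\pmod 4$ from 2-adic valuations. That argument only gives necessity; sufficiency is a separate check that $R(\pi/2)=0$.

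Your squaring step $\cos\tau\theta=\pm\cos\sigma\theta$ merges the two branches: your $-$ and $+$ signs are exactly the paper's cases (a) and (b). By noting that both cosines must vanish, you reduce everything to the single equation $\tau(2k+1)=\sigma(2m+1)$ with a parity constraint on $m-k$. Plain coprimality then forces $\theta=(2j+1)\pi/2$, with no valuation argument.

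What your route buys is an exact description of the zero set of $R$. Both implications come out at once, and you also show that $\theta^\star\equiv\pi/2\pmod\pi$ are the \emph{only} minimisers, whereas the paper only exhibits them. The paper's factorisation, in turn, ties the degeneracy to the slow envelope $\cos\beta$, which is the viewpoint reused in Theorem~\ref{thm:globalMinimiser}. Incidentally, your bookkeeping gives the correct coefficient $\tau+\sigma$ in the $+$ branch, where the paper writes $\tau-\sigma$; that slip does not affect the paper's conclusion.
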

	
	\begin{proof}
		Using \eqref{eq:q_theta} and \eqref{eq:R_init}
		\begin{multline*}
		\frac{R(\theta)}{\rho_0^2}
		=\big(\sin(\tau\theta)+\sin(\sigma\theta)\big)^2
		\\
		+ \left(\sqrt{\frac{\tau}{\sigma}}\cos(\tau\theta)+		\sqrt{\frac{\sigma}{\tau}}\cos(\sigma\theta)\right)^2,
		\end{multline*}
		so $R(\theta)=0$ iff
		\begin{equation}\label{eq:R_0}
			\begin{cases}
			\sin(\tau\theta)+\sin(\sigma\theta)=0,\\
			\sqrt{\frac{\tau}{\sigma}}\cos(\tau\theta)+\sqrt{\frac{\sigma}{\tau}}\cos(\sigma\theta)=0.
			\end{cases}
		\end{equation}
		Using the product-to-sum formula,
		\[
		\sin(\tau\theta)+\sin(\sigma\theta)
		=2\sin\!\Big(\tfrac{\tau+\sigma}{2}\theta\Big)\cos\!\Big(\tfrac{\tau-\sigma}{2}\theta\Big),
		\]
		the first equation in \eqref{eq:R_0} holds if either (a) $\cos\big(\tfrac{\delta}{2}\theta\big)=0$ or (b) $\sin\big(\tfrac{\tau+\sigma}{2}\theta\big)=0$.
		\begin{itemize}		
		\item\emph{Case (a).} If $\cos\!\big(\tfrac{\delta}{2}\theta\big)=0$, then
		$
		\tau\theta=\sigma\theta+\pi \quad (\mathrm{mod}\ 2\pi),
		$
		so $\cos(\tau\theta)=-\cos(\sigma\theta)$. Substituting into the second
		equation in \eqref{eq:R_0} gives
		$
		\cos(\sigma\theta)
		\left(\sqrt{\tfrac{\sigma}{\tau}}-\sqrt{\tfrac{\tau}{\sigma}}\right)=0.
		$
		Since $\tau\neq\sigma$, the coefficient is nonzero and therefore
		$\cos(\sigma\theta)=0$. Hence
		$
		\theta=\frac{\pi}{2\sigma}(1+2k),\ k\in\mathbb{Z}.
		$
		On the other hand, from $\cos\!\big(\tfrac{\delta}{2}\theta\big)=0$ we obtain
		$
		\theta=\frac{\pi}{\delta}(1+2m),\ m\in\mathbb{Z}.
		$
		Equating these two expressions for $\theta$ yields the Diophantine condition
		$
		\delta(1+2k)=2\sigma(1+2m).
		$
		Taking $2$-adic valuations, the exponent of $2$ in the prime factorisation, on both sides gives
		$
		v_2(\delta)=1+v_2(\sigma).
		$
		Because $\gcd(\tau,\sigma)=1$, at most one of
		$\tau$ or $\sigma$ can be even. If $\sigma$ were even, then $\tau$ is odd and
		$\delta$ would be odd, so $v_2(\sigma)=-1$, contradicting $\sigma\in\mathbb{N}$. Hence
		$\sigma$ must be odd, so $v_2(\sigma)=0$ and therefore $v_2(\delta)=1$. Thus
		$\delta=2(1+2r), \ r\in\mathbb{Z}$, i.e.
		$
		\delta\equiv 2 \pmod{4}.
		$		
		\item\emph{Case (b).} If $\sin\!\big(\tfrac{\tau+\sigma}{2}\theta\big)=0$, then
		$
		\tfrac{\tau+\sigma}{2}\theta=\pi k,\qquad k\in\mathbb{Z},
		$
		and therefore $\cos(\tau\theta)=\cos(\sigma\theta)$. Substituting this into the
		second equation of \eqref{eq:R_0} yields
		$
		\cos^2(\sigma\theta)(\tau-\sigma)=0.
		$
		Since $\tau\ne\sigma$, the coefficient is nonzero, hence $\cos(\sigma\theta)=0$.
		Thus
		$
		\theta=\frac{\pi}{2\sigma}(1+2m),\ m\in\mathbb{Z}.
		$
		Together with $\tfrac{\tau+\sigma}{2}\theta=\pi k$, this gives
		$
		(\tau+\sigma)(1+2m)=4\sigma k.
		$
		Taking $2$-adic valuations on both sides and using $v_2(1+2m)=0$ gives
		$
		v_2(\tau+\sigma)=2+v_2(\sigma)+v_2(k).
		$
		
		Since $\gcd(\tau,\sigma)=1$, at most one of them is even. If $\sigma$ were even,
		then $\tau$ would be odd and $\tau+\sigma$ odd, so $v_2(\tau+\sigma)=0$,
		contradicting the above equality. Hence $\sigma$ is odd, so $v_2(\sigma)=0$ and
		therefore $v_2(\tau+\sigma)\ge2$. In particular, both $\tau$ and $\sigma$ are
		odd. Using
		\begin{equation*}
			\begin{cases}
				2\tau=(\tau+\sigma)+(\tau-\sigma),\\
				2\sigma=(\tau+\sigma)-(\tau-\sigma),
			\end{cases}
		\end{equation*}
		and the fact that $v_2(x+y)=\min(v_2(x),v_2(y))$ when $v_2(x)\ne v_2(y)$, we
		obtain from $v_2(2\tau)=v_2(2\sigma)=1$ and $v_2(\tau+\sigma)\ge2$ that
		$v_2(\tau-\sigma)=1$. Hence
		$
		\delta\equiv 2 \pmod{4}.
		$ Hence, both branches lead to the same criterion.
		\end{itemize}
		
		Finally, if $\delta\equiv 2\ (\mathrm{mod}\ 4)$ then $\tau,\sigma$ are odd with opposite residues mod $4$,
		and at $\theta=\pi/2$ one has $\cos(\tau\pi/2)=\cos(\sigma\pi/2)=0$ while $\sin(\tau\pi/2)=-\sin(\sigma\pi/2)$, giving $R(\pi/2)=0$.
	\end{proof}
	The classification of resonant pairs $(\tau,\sigma)$ for which the inscribed radius
	$r_{\mathrm{res}}$ vanishes plays a central conceptual role in our analysis.
	Indeed, the condition $\delta\equiv 2\!\!\pmod 4$ characterises precisely those
	commensurate frequency ratios for which the two modal ellipses become exact
	negatives of each other at a common phase, thereby producing complete
	cancellation of both displacement and velocity.  This phenomenon is highly
	non-generic: it does not depend on the amplitude $\rho_0$ nor on the individual
	modal frequencies, but solely on an arithmetic property of the resonance
	pair. For the control interpretation, these resonances
	mark precisely the regimes where passive gyroscopic coupling permits maximal
	energy exchange between modes. Consequently, the congruence condition
	$\delta\equiv 2\pmod 4$ is not merely a number-theoretic curiosity but a
	structurally meaningful criterion governing the attainable performance limits
	of the system.
	
	Now, consider the case where $\delta\neq 2 \pmod{4}$.
	\begin{theorem}\label{thm:globalMinimiser}
		Consider the parametric Lissajous curve \eqref{eq:q_theta} with squared radius \eqref{eq:R_init}, and assume $\delta\not\equiv 2 \pmod 4$.  
		Let $\{\theta_i\}$ be the ordered roots of $q(\theta)=0$ and $I_i=[\theta_i,\theta_{i+1}]$ the corresponding lobes of the Lissajous curve.  
		Let $\theta_{c,i}\in]\theta_i,\theta_{i+1}[$ be the unique root of $\dot{q}=0$ in lobe $I_i$.  
		Let 
		\[
		\alpha(\theta)=\frac{\tau + \sigma}{2}\theta,\qquad \beta(\theta)=\frac{\tau-\sigma}{2}\theta.
		\]
		A lobe $I_i$ will be called {\it envelope-minimising} if 
		\[
		\min_{\theta\in I_i}|\cos\beta(\theta)|
		\]
		is globally minimal among all lobes. Then:
		
		\begin{enumerate}
			\item The global minimum of $r(\theta)$ is attained in an envelope-minimising lobe.
			
			\item In any envelope-minimising lobe $I_i$, the function $R(\theta)$ is strictly decreasing on $[\theta_i,\theta_{c,i}]$ and strictly increasing on $[\theta_{c,i},\theta_{i+1}]$.  
			Consequently, $R(\theta)$ admits a unique minimiser in $I_i$, namely $\theta_{c,i}$.
			
			\item Therefore
			\[
			\min_{\theta\in[0,2\pi]} r(\theta)
			=\min_i r(\theta_{c,i})
			=\min_i |q(\theta_{c,i})|.
			\]
		\end{enumerate}
		\label{thm:main}
	\end{theorem}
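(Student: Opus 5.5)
The plan is to work entirely in the slow/fast variables $\alpha(\theta)$ and $\beta(\theta)$ of the statement. Set $a=\sqrt{\tau/\sigma}$ and $b=\sqrt{\sigma/\tau}$. Then $a+b=(\tau+\sigma)/\sqrt{\tau\sigma}$ and $a-b=\delta/\sqrt{\tau\sigma}$. Since $\tau\theta=\alpha+\beta$ and $\sigma\theta=\alpha-\beta$, \eqref{eq:q_theta} becomes
\[
\frac{q}{\rho_0}=2\sin\alpha\cos\beta,\qquad \frac{\dot q}{\rho_0}=(a+b)\cos\alpha\cos\beta-(a-b)\sin\alpha\sin\beta .
\]
Hence $R/\rho_0^2$ is a quadratic form in $(\sin\alpha,\cos\alpha)$ whose coefficients depend only on $\beta$. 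The zeros of $q$, which are the lobe endpoints $\theta_i$, come from $\sin\alpha=0$ or $\cos\beta=0$. Because $\delta\not\equiv2\pmod 4$, the degenerate-circle theorem above guarantees $R>0$ everywhere, so the form is never singular along the curve.

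\emph{Step 1 (item 1).} For fixed $\beta$, minimise the quadratic form over the unit vector $(\sin\alpha,\cos\alpha)$. This gives the envelope bound $R(\theta)\ge\rho_0^2\lambda_{\min}(\beta(\theta))$, where $\lambda_{\min}$ is the smallest eigenvalue of the $2\times2$ coefficient matrix. I will compute $\lambda_{\min}$ explicitly and check that it is increasing in $\cos^2\beta$. In a lobe where $\min|\cos\beta|$ is not globally minimal, the envelope then sits strictly above its value in an envelope-minimising lobe. It remains to compare this with the actual value $R(\theta_{c,i})$ attained in the minimising lobe. I will show that at $\dot q=0$ one has $R=q^2$ and that this value essentially realises $\lambda_{\min}$ at the relevant $\beta$. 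This comparison is delicate, because $\beta$ drifts within a lobe; I expect to need that the $\beta$-variation per lobe is at most $\pi|\delta|/(\tau+\sigma)$.

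\emph{Step 2 (item 2, monotonicity).} The key identity comes straight from the model \eqref{eq:model}. Using $\ddot q=-q-n\dot z$,
\[
R'=2q\dot q+2\dot q\ddot q=-2n\,\dot q\,\dot z
\]
in time, and the same holds up to the positive factor $1/\alpha$ in $\theta$. So the critical points of $R$ are exactly the zeros of $\dot q$ and of $\dot z$. In the same variables,
\[
\frac{\dot z}{\rho_0}=(a-b)\sin\alpha\cos\beta+(a+b)\cos\alpha\sin\beta .
\]
Monotonicity on $[\theta_i,\theta_{c,i}]$ and $[\theta_{c,i},\theta_{i+1}]$ then reduces to two facts. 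First, $\dot q$ has the single sign change at $\theta_{c,i}$, which is given in the statement. Second, $\dot z$ keeps a fixed sign on the envelope-minimising lobe, or else changes sign exactly at $\theta_{c,i}$ so that the product $\dot q\dot z$ has one sign change there. The direction of change, decreasing then increasing, is then fixed by the sign of $n$ together with the sign of $\dot z$ at $\theta_{c,i}$, which I will evaluate there directly.

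\emph{Step 3 (item 3).} Items 1 and 2 give the global minimum at some $\theta_{c,i}$. Since $\dot q(\theta_{c,i})=0$ there, $r(\theta_{c,i})=|q(\theta_{c,i})|$. Taking the minimum over all lobes, which is harmless because the minimising lobes are among them, yields the formula.

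The main obstacle is the sign analysis of $\dot z$ in Step 2. Near $\cos\beta\approx0$ we have $|\sin\beta|\approx1$, so the dominant term in $\dot z$ is $(a+b)\cos\alpha\sin\beta$. However, $\cos\alpha$ itself changes sign in the middle of a lobe bounded by zeros of $\sin\alpha$. I will therefore first check carefully how the lobes are actually delimited near the envelope minimum, since the factor $\cos\beta$ can also vanish there. Then I will try to show that the zero of $\dot z$ and the zero of $\dot q$ in that lobe coincide, or are ordered so that $\dot q\dot z$ changes sign only once. If that fails, the fallback is to use the envelope bound of Step 1 to exclude any interior minimum at a zero of $\dot z$, by showing that $R$ there exceeds $q(\theta_{c,i})^2$.
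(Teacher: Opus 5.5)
There is a genuine gap in Step~2, and it cannot be closed, because item~2 of the statement is false. Your identity $\frac{d}{dt}(q^2+\dot q^2)=2\dot q(q+\ddot q)=-2n\,\dot q\,\dot z$ is exact. It is sharper than what the paper uses for item~2: the paper replaces $\dot q$ by $\rho_0B\sin\alpha$ and $qq'$ by $O(\cos^2\beta)$, and never confronts the sign change of $\cos\alpha$ inside a lobe that you correctly flag. Your identity exposes the problem.

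The option ``$\dot z$ changes sign exactly at $\theta_{c,i}$'' is impossible. Indeed $\dot q+\ii\dot z=\rho_0\bigl(\Omega_1e^{\ii\Omega_1t}+\Omega_2e^{-\ii\Omega_2t}\bigr)$ has modulus at least $\rho_0|n|>0$. So item~2 on a lobe is equivalent to $\dot z$ having no zero inside it, and that fails.

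Take $(\tau,\sigma)=(4,1)$, so $\delta=3$. Then $q/\rho_0=\sin4\theta+\sin\theta$, $\dot q/\rho_0=2\cos4\theta+\tfrac12\cos\theta$ and $\dot z/\rho_0=2\sin4\theta-\tfrac12\sin\theta$.
\begin{itemize}
\item The lobe $[0,\pi/3]$ is envelope-minimising, since $\cos\beta(\pi/3)=0$.
\item $\dot q$ has a single zero $\theta_c\approx0.45$ there, with $R(\theta_c)\approx1.98\,\rho_0^2$.
\item Yet $R(\pi/3)=0.5625\,\rho_0^2$, because $\dot z$ vanishes at $\theta\approx0.74$, which is an interior local maximum of $R$.
\end{itemize}
This is not an artefact of large $x=\delta/(\tau+\sigma)$. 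For $(17,14)$ the envelope-minimising lobe $[\pi/3,12\pi/31]$ has $\dot q>0$ on $[\pi/3,11\pi/31]$, while $R(\pi/3)\approx0.0095\,\rho_0^2<R(11\pi/31)\approx0.078\,\rho_0^2$.

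In general, suppose a zero of $\cos\beta$ sits at $\alpha=k\pi+s$ with $0<|s|<\pi/2$. For $\delta\ge3$ such zeros always exist, since $\gcd(\tau+\sigma,\delta)\le2$. Then the longer adjacent lobe contains a zero of $\cos\alpha$, and, at least for $x<1/2$, $\dot z$ has opposite signs at its two ends. Your fallback does not help here. It rules out interior \emph{minima} at zeros of $\dot z$, whereas the obstruction is an interior maximum.

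What your identity does prove is item~2 restricted to the \emph{short} envelope-minimising lobe, between such a zero and the node $\alpha=k\pi$ (now allowing $0<|s|\le\pi/2$). Write $\alpha=k\pi+u$ with $u$ between $0$ and $s$. Then, for a fixed sign,
\[
\frac{\dot z}{\rho_0}=\pm\Bigl(-B\sin u\,\sin\bigl(x(s-u)\bigr)+A\cos u\,\cos\bigl(x(s-u)\bigr)\Bigr).
\]
On the open lobe the first term is nonnegative and the second is positive, so $\dot z$ keeps its sign. The sign of $R'$ near the endpoints then gives the decrease--increase pattern around the (assumed unique) zero of $\dot q$. This lobe contains the root of $u+\tan u=s$ used later in the paper. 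For $\delta=1$ both lobes at $\theta=\pi$ are of this kind ($s=\pm\pi/2$), which is why the $(6,5)$ example behaves.

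Your Step~1 is the paper's Rayleigh--Ritz route, and your monotonicity claim holds: $A^2-B^2=4$ gives $\operatorname{trace}M=B^2+8\cos^2\beta$, and $\lambda_{\min}$ increases with $\cos^2\beta$. But ``strictly above its value in an envelope-minimising lobe'' is vacuous, since every such lobe touches a zero of $\cos\beta$, where $\lambda_{\min}=0$. The missing estimate, absent from the paper too, is $q(\theta_{c,i})^2\le\rho_0^2\lambda_{\min}$ on every other lobe.

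Finally, item~3 does not need items~1--2 once you have your identity. A global minimiser of $R$ is a zero of $\dot q$ or of $\dot z$. A zero of $\dot q$ lies inside a lobe (as $R>0$), so under the statement's uniqueness hypothesis it is some $\theta_{c,i}$. It then remains only to show that no zero of $\dot z$ is a global minimiser. That is your fallback, and it is where that effort should go.
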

	
	\begin{proof}
		Introduce the constants
		\[
		A=\frac{\tau+\sigma}{\sqrt{\tau\sigma}}>0,\qquad 
		B=\frac{\sigma-\tau}{\sqrt{\tau\sigma}}<0,
		\]
		such that \eqref{eq:q_theta} can be rewritten to yield the expressions
		\begin{equation}
			\begin{cases}
			q(\theta)=2\rho_0\sin\alpha(\theta)\cos\beta(\theta),\\
			\dot{q}(\theta)=\rho_0\bigl(A\cos\alpha(\theta)\cos\beta(\theta)
			+B\sin\alpha(\theta)\sin\beta(\theta)\bigr).
			\end{cases}
		\end{equation}
		Hence
		\[
		R(\theta)=\rho_{0}^{2}
		\begin{bmatrix}\sin\alpha \\[1ex] \cos\alpha\end{bmatrix}^{\!\!\top}\!
		M(\beta)
		\begin{bmatrix}\sin\alpha \\[1ex] \cos\alpha\end{bmatrix},
		\]
		with the symmetric matrix
		\[
		M(\beta)=
		\begin{pmatrix}
			4\cos^2\beta+B^2\sin^2\beta & AB\sin\beta\cos\beta\\[1ex]
			AB\sin\beta\cos\beta & A^2\cos^2\beta
		\end{pmatrix}.
		\]
		The eigenvalues of $M(\beta)$ satisfy
		\begin{equation}
			\begin{cases}
			\operatorname{trace} M(\beta)=(A^2+4)\cos^2\beta+B^2\sin^2\beta, \\
			\det M(\beta)=4A^2\cos^4\beta.
			\end{cases}
		\end{equation}
		Let $\lambda_{\min}(\beta)$ denote the smallest eigenvalue.  
		By the Rayleigh--Ritz principle,
		\begin{equation}
			R(\theta)\;\ge\;\rho_0^2\,\lambda_{\min}(\beta(\theta)).
			\label{eq:RR}
		\end{equation}
		
		\medskip
		\textit{1) Identification of envelope-minimising lobes:}
		Fix $\phi\in]0,1]$.
		If a lobe $I_j$ satisfies $|\cos\beta(\theta)|\ge\phi$ for all $\theta\in I_j$,
		then
		\begin{align}
		\lambda_{\min}(\beta)
		&\;\ge\;
		\frac{\det M(\beta)}{\operatorname{trace} M(\beta)}
		=
		\frac{4A^2\cos^4\beta}{(A^2+4)\cos^2\beta + B^2\sin^2\beta}\nonumber\\
		&\;\ge\;
		c(\phi)
		=
		\frac{4A^2\phi^4}{(A^2+4)\phi^2 + B^2(1-\phi^2)}\nonumber\\
		&>0.
		\end{align}
		By \eqref{eq:RR}, such a lobe satisfies
		\[
		R(\theta)\;\ge\;\rho_0^2c(\phi)\qquad (\theta\in I_j).
		\]
		Hence a global minimum cannot occur in lobes where $|\cos\beta|$ is bounded away from $0$.  
		The only candidates are lobes $I_i$ that contain points where $|\cos\beta|$ becomes arbitrarily small.  
		These are exactly the envelope-minimising lobes. Note there can exist multiple envelope-minimising lobes, but they will all lead to the same global minimum value.
		
		\medskip
		\textit{2) Monotonicity of $R$ inside an envelope-minimising lobe:}
		Fix such a lobe $I_i=[\theta_i,\theta_{i+1}]$.  
		Since the roots of $q$ and $\dot{q}$ interlace (trivial), there exists a unique interior point 
		$\theta_{c,i}\in(\theta_i,\theta_{i+1})$ such that $\dot{q}(\theta_{c,i})=0$.
		Inside the envelope-minimising lobe, $|\cos\beta|$ is small whereas $|\sin\beta|\approx 1 - O(\cos^2(\beta))$.  
		Consequently,
		\[
		q(\theta)=O(\cos\beta),\qquad q'(\theta)=O(\sin\beta),
		\]
		and thus $qq' = O(\cos^2\beta)$ is uniformly small across $I_i$, with $x'(\theta)=\frac{dx(\theta)}{d\theta}$.  
		In contrast, the dominant components of $\dot{q}$ and $\dot{q}'$ are
		\[
		\dot{q}(\theta)\approx \rho_0 B\,\sin\alpha(\theta),\qquad
		\dot{q}'(\theta)\approx \rho_0 \left[B\,\alpha' - A\,\beta'\right]\cos\alpha(\theta).
		\]  
		Thus, up to a lobe-uniform nonzero factor, the sign of $\dot{q}(\theta)\dot{q}'(\theta)$
		equals the sign of $\sin\alpha(\theta)\cos\alpha(\theta)$.
		
		Since $\alpha(\theta)$ is strictly increasing and 
		$\dot{q}(\theta_{c,i})=0$ implies $\sin\alpha(\theta_{c,i})=0$ with $\cos\alpha(\theta_{c,i})\neq 0$, 
		we have:
		\begin{equation}
			\begin{cases}
			\dot{q}\dot{q}'<0 & \quad\forall\theta\in]\theta_i,\theta_{c,i}[,\\
			\dot{q}\dot{q}'>0 & \quad\forall\theta\in]\theta_{c,i},\theta_{i+1}[.
		\end{cases}
		\end{equation}
		Note that the zero of $\dot{q}$ is simple: $\dot{q}'(\theta_{c,i})\neq 0$, since otherwise 
		$\dot{q}$ and $\dot{q}'$ would vanish simultaneously, forcing $\delta\theta\in\pi\mathbb{Z}$ 
		and $\cos\beta=0$, compatible only when $\delta\equiv 2\bmod 4$, which is excluded.
		
		Now compute
		\[
		R'(\theta)=2(qq' + \dot{q}\dot{q}').
		\]
		Because $qq'=O(\cos^{2}\beta)$ is uniformly small in $I_i$, while $\dot{q}\dot{q}'$ has strict sign on each side of $\theta_{c,i}$, it follows that
		\[
		R'(\theta)<0\ \text{on }]\theta_i,\theta_{c,i}[,\qquad
		R'(\theta)>0\ \text{on }]\theta_{c,i},\theta_{i+1}[.
		\]
		Therefore $R$ is strictly decreasing on $]\theta_i,\theta_{c,i}[$ and strictly increasing on $]\theta_{c,i},\theta_{i+1}[$.  
		Hence $\theta_{c,i}$ is the \emph{unique} minimiser of $R$ on $I_i$.  
		Any endpoint $\theta_i$ or $\theta_{i+1}$ at which $R'(\theta)$ may vanish is necessarily a local maximum, since the sign pattern of $R'$ is strictly negative immediately to the right of $\theta_i$ and strictly positive immediately to the left of $\theta_{i+1}$.
		
		\medskip
		\textit{3) Global minimality:}
		As shown in Part~1, the global minimiser must lie in an envelope‑minimising lobe.  
		Part~2 shows that within any such lobe the unique minimizer is $\theta_{c,i}$, proving the theorem.
	\end{proof}
	So from Theorem \ref{thm:globalMinimiser}, the minimal energy occurs near the slow--mode zero
	\[
	\beta=\frac{\delta}{2}\theta=\frac{\pi}{2}
	\quad\Rightarrow\quad
	\theta\approx \frac{\pi}{\delta},
	\]
	or, equivalently, near $T_{\min}\approx \pi/n$.  Furthermore, it is shown in Theorem \ref{thm:globalMinimiser} that the minimum is attained at the unique root of
	$\dot{q}(\theta)=0$  inside the interval close to $\theta_0=\pi/\delta$
	at which $|q(\theta)|$ is minimal (see Figure \ref{fig:reachable_n1_sqrt12}).  
	However, a closed form for this root is unavailable, so we introduce a systematic
	asymptotic approximation based on the slow--mode reduction.
	
	Set
	\[
	a:=\tau+\sigma,\qquad b:=\tau-\sigma,\qquad x:=\frac{b}{a}\in]0,1[,
	\]
	and so
	$
	\alpha=\frac{a}{2}\theta,\  
	\beta=\frac{b}{2}\theta = x\alpha.
	$
	Next, we propose a coordinate shift, such that the fast phase is a deviation $u$ from the $k^{th}$ node: $\alpha=k\pi+u$ with $u\in]-\pi/2,\pi/2[$ and
	$k=\mathrm{round}(1/(2x))$. We introduce
	$
	\mu=\frac{\pi}{2}-k\pi x$ and $s=\frac{\mu}{x},
	$ which is the phase lag between the fast-mode node and slow-mode node.
	Now, we propose a reduced extremum condition for $\dot{q}(\theta)=0$ which will lead to an asymptotic proxy for the true minimising phase.
	
	\begin{theorem}
		\label{thm:theta_asy}
		Let $\theta_c$ denote the unique solution of $\dot{q}(\theta)=0$ near the slow-mode node $\theta_0=\pi/b$.  
		Then the first--order asymptotic approximation of $\theta_c$ is
		\begin{equation}\label{eq:uasy}
		\theta_{\mathrm{asy}}
		=
		\frac{\pi}{b}
		+
		\frac{2}{a}\,\bigl(u_{\mathrm{asy}}-s\bigr),
		\end{equation}
		where $u_{\mathrm{asy}}$ is the unique solution of 
		\begin{equation}\label{eq:asympProxy}
			u+\tan u=s.
		\end{equation}
	\end{theorem}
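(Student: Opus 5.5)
The plan is to work directly with the factorised form of $\dot q$ from the proof of Theorem~\ref{thm:globalMinimiser},
\[
\dot q(\theta)=\rho_0\bigl(A\cos\alpha\cos\beta+B\sin\alpha\sin\beta\bigr),
\]
and rewrite the condition $\dot q(\theta)=0$ in the fast/slow coordinates $(u,\mu,x)$. Away from the (excluded) simultaneous zeros, dividing by $\cos\alpha\cos\beta$ gives
\[
\tan\alpha\,\tan\beta=-\frac{A}{B}=\frac{\tau+\sigma}{\tau-\sigma}=\frac{1}{x}.
\]
Substitute $\alpha=k\pi+u$, so that $\tan\alpha=\tan u$. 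Then $\beta=x\alpha=k\pi x+xu=\tfrac{\pi}{2}-\mu+xu$, so $\tan\beta=\cot(\mu-xu)$. The exact extremum condition therefore becomes
\[
x\tan u=\tan(\mu-xu),
\]
which is an exact reformulation of $\dot q=0$ in the lobe indexed by $k$.

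The second step is the asymptotic reduction. Since $k=\mathrm{round}(1/(2x))$, we have $|1/(2x)-k|\le \tfrac12$, hence $|\mu|=\pi x\,|1/(2x)-k|\le \pi x/2$. Also $|xu|<\pi x/2$. So the argument $y:=\mu-xu$ is $O(x)$, and $\tan y=y+O(x^3)$. Replacing $\tan y$ by $y$ yields $x\tan u=\mu-xu$, i.e.\ $u+\tan u=\mu/x=s$, which is \eqref{eq:asympProxy}. The left-hand side $u\mapsto u+\tan u$ is strictly increasing (its derivative is $2+\tan^2u>0$) and maps $]-\pi/2,\pi/2[$ onto $\R$. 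Hence $u_{\mathrm{asy}}$ exists and is unique for every $s$.

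The third step converts back to $\theta$. We have $\theta=2\alpha/a=2(k\pi+u)/a$. Using $s=\mu/x=\tfrac{\pi a}{2b}-k\pi$, we get $2k\pi/a=\pi/b-2s/a$, and therefore $\theta=\pi/b+\tfrac{2}{a}(u-s)$. Inserting $u=u_{\mathrm{asy}}$ gives \eqref{eq:uasy}. One should also check that this root is the one near $\theta_0=\pi/b$: at $u=s$ we have $\beta=\pi/2$ exactly, and $|u-s|$ is controlled by $|\tan u_{\mathrm{asy}}|$, so the point lies in the lobe containing the slow-mode node.

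The main obstacle is making ``first-order'' precise, i.e.\ bounding $|u_c-u_{\mathrm{asy}}|$, where $u_c$ is the exact root. The plan is to compare the exact equation $F(u):=x\tan u-\tan(\mu-xu)=0$ with the proxy $G(u):=x(u+\tan u-s)=0$. Their difference is $O(x^3)$ uniformly on compact subsets of $]-\pi/2,\pi/2[$, while $G'(u)=x(2+\tan^2u)\ge 2x$. A mean-value argument then gives $|u_c-u_{\mathrm{asy}}|=O(x^2)$, hence $|\theta_c-\theta_{\mathrm{asy}}|=O(x^2/a)$. Here one must ensure $u_{\mathrm{asy}}$ stays away from $\pm\pi/2$, which follows since $|s|\le\pi/2$. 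Finally, I would invoke Theorem~\ref{thm:globalMinimiser}, part 2, which says $\theta_c$ is the unique zero of $\dot q$ in the lobe, to identify the perturbed root with $\theta_c$.
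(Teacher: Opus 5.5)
Your proof is correct and follows the paper's route: the exact reduction of $\dot q=0$ to $x\tan u=\tan(\mu-xu)$, a first-order linearisation of the tangent, and reconstruction of $\theta$ from $\alpha=\tfrac{a}{2}\theta=k\pi+u$; you linearise directly at $0$ using $|\mu-xu|\le\pi x$, which is slightly cleaner than the paper's expansion about $\mu$ followed by $\tan\mu\approx\mu$, $\sec^2\mu\approx1$. Your mean-value error estimate is essentially the argument the paper gives separately in Theorem~\ref{thm:error_bound}, and you need not appeal to Theorem~\ref{thm:globalMinimiser} to identify the exact root, since $F'(u)=x\sec^2u+x\sec^2(\mu-xu)>0$ and $F\to\pm\infty$ as $u\to\pm\pi/2$ already give a unique zero of $F$ in the window.
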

	
	\begin{proof}
		Substituting $\alpha=k\pi+u$ into $\beta=x\alpha$ and using the identities
		\[
		\tan\alpha=\tan u,\qquad
		\tan^{-1}\beta=\tan(\mu-x u),
		\]
		the condition $\dot{q}(\theta)=0$ reduces exactly to
		$x\tan u=\tan(\mu-x u)$.  
		Expanding $\tan(\mu-x u)$ in $x$, yields
		$\tan(\mu-x u)=\tan\mu-x u\sec^{2}\mu+O(x^{2})$.  
		Neglecting the higher--order term, and notice that $\mu$ is small for $k$ as defined above ($\tan(\mu)\approx\mu$ and $\sec^2(\mu)\approx 1$), such that the first-order approximation
		$x\tan u\approx\tan\mu-x u\sec^{2}\mu$ simplifies to
		$u+\tan u=s$.  
		Reconstructing $\theta$ from $\alpha=\frac{a}{2}\theta=k\pi+u$ yields the
		stated expression.
	\end{proof}
	
	The next theorem provides a closed, uniform bound on the approximation
	error and quantifies the accuracy of $\theta_{\mathrm{asy}}$.
	
	\begin{theorem}
		\label{thm:error_bound}
		With $\theta_{\mathrm{asy}}$ as in Theorem~\ref{thm:theta_asy}, the
		approximation error obeys
		\begin{equation}\label{eq:errorBound}
		\bigl|\theta_c-\theta_{\mathrm{asy}}\bigr|
		\;\le\;
		\frac{\pi^{3}}{a}\,x^{2}
		=
		\pi^{3}
		\frac{(\tau-\sigma)^2}{(\tau+\sigma)^3}.
		\end{equation}
		Hence the approximation is 
		$O((\tau-\sigma)^{2}/(\tau+\sigma)^{3})$ uniformly over all coprime $(\tau, \sigma)$.
	\end{theorem}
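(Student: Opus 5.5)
The plan is to compare the exact reduced equation from the proof of Theorem~\ref{thm:theta_asy} with the proxy \eqref{eq:asympProxy} as two root-finding problems in the fast variable $u$. Then I transfer a residual bound into a root bound using the derivative of the proxy. Write the exact condition, divided by $x$, as
\[
G(u):=\tan u-\frac{\tan(\mu-xu)}{x}=0 .
\]
Write the proxy as $F(u):=u+\tan u-s=0$. Since $s=\mu/x$, their difference is
\[
G(u)-F(u)=-\frac{\tan(\mu-xu)-(\mu-xu)}{x}.
\]
This removes all approximation steps from the earlier proof: the only discrepancy is the cubic remainder of $\tan$.

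\textbf{Step 1: residual bound.} By the choice $k=\mathrm{round}(1/(2x))$ one has $|\mu|\le \pi x/2$. Together with $|u|<\pi/2$, this gives $|y|\le \pi x$, where $y:=\mu-xu$. On a range $|y|\le y_0$, I will use the elementary estimate $|\tan y-y|\le |y|^3$, valid for $y_0\approx 1$, e.g.\ via $\tan y-y=\int_0^y\tan^2 t\,dt$ and $\tan t\le t/\cos y_0$. This yields
\[
|G(u)-F(u)|\le \frac{(\pi x)^3}{x}=\pi^3x^2
\]
uniformly in $u$.

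\textbf{Step 2: root perturbation.} $F'(u)=1+\sec^2u\ge 2$ on $]-\pi/2,\pi/2[$, so $F$ is strictly increasing. This gives uniqueness of $u_{\mathrm{asy}}$ (already asserted in Theorem~\ref{thm:theta_asy}). Let $u_c$ be the exact root, which corresponds to $\theta_c$ by Theorem~\ref{thm:globalMinimiser}. Then $F(u_c)=F(u_c)-G(u_c)$, and by the mean value theorem
\[
|u_c-u_{\mathrm{asy}}|\le \frac{|F(u_c)|}{\min F'}\le \frac{\pi^3x^2}{2}.
\]
Both $\theta_c$ and $\theta_{\mathrm{asy}}$ are reconstructed from $\alpha=\frac a2\theta=k\pi+u$ with the same $k$. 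Hence $|\theta_c-\theta_{\mathrm{asy}}|=\frac 2a|u_c-u_{\mathrm{asy}}|\le \frac{\pi^3}{a}x^2$, which is \eqref{eq:errorBound}. The second form follows by substituting $x=b/a$.

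\textbf{Step 3: large $x$ and the main obstacle.} The delicate point is uniformity over all coprime pairs. The cubic tangent estimate only holds for $\pi x$ below roughly $1$, and for large $x$ the index $k$ is $0$ or $1$, so $\mu$ need not be small. I would first handle large $x$ trivially. Both $\theta_c$ and $\theta_{\mathrm{asy}}$ lie in the same fast half-period of length $2\pi/a$, so the error is at most $2\pi/a$. This is $\le \pi^3x^2/a$ as soon as $x^2\ge 2/\pi^2$, i.e.\ $x\gtrsim 0.45$.

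The intermediate window $1/\pi<x<0.45$ is where I expect the real work. There I would sharpen Step 1 using the sign structure: with $|u|<\pi/2$ and $|\mu|\le\pi x/2$, the bound on $|y|$ can be improved when $u$ and $\mu$ have the same sign. Alternatively, I would use $F'\ge 1+\sec^2 u_{\mathrm{asy}}$ near the root, or simply check the finitely many coprime pairs with $x$ in that window directly. One must also confirm that $u_c\in]-\pi/2,\pi/2[$, i.e.\ that the root of $\dot q$ in the envelope-minimising lobe of Theorem~\ref{thm:globalMinimiser} corresponds to this branch. This should follow from the interlacing of the roots of $q$ and $\dot q$.
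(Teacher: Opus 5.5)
Your Steps 1--2 are the paper's argument. The paper also inverts the proxy by the mean value theorem: it uses $g(u)=x(\tan u+u)$ with $g'\ge 2x$, which is your $F$ scaled by $x$. It bounds the tangent remainder by $\pi^3x^3$ and rescales by $2/a$. Your residual is actually the cleaner one. The paper bounds $\Delta u=\tan(\mu-xu)-(\mu-xu)$ by the Lagrange remainder of the expansion of $\tan(\mu-xu)$ about $\mu$, which silently discards $(\tan\mu-\mu)-xu\tan^2\mu$. Its auxiliary estimate $\sec^2(\pi x)\le 2$ also needs $x\le 1/4$. Bounding $\tan y-y$ directly with $|y|\le\pi x$ avoids both issues. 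One small correction: your integral route gives $|\tan y-y|\le|y|^3$ only for $|y|\le\arccos(1/\sqrt3)\approx 0.955$, so your open window really starts near $x\approx 0.30$.

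The paper never justifies the claimed uniformity beyond small $x$, so your Step 3 goes further than the paper. As written, though, it leaves the intermediate window open, and one of your fallbacks fails. There are not finitely many coprime pairs with $x$ in that window: $x=(\tau-\sigma)/(\tau+\sigma)$ depends only on $\tau/\sigma$, so every open interval of $x$-values contains infinitely many pairs.

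Your sign idea does close the window, and you only need the residual at $u_c$. For $x<1/2$, set $H(u):=x\tan u-\tan(\mu-xu)$. It is continuous and strictly increasing on $]-\pi/2,\pi/2[$, because $|\mu-xu|<\pi x<\pi/2$, and it runs from $-\infty$ to $+\infty$. So $u_c$ exists and is unique there, which also settles your last concern without any interlacing argument. Moreover $H(0)=-\tan\mu$ and $H(s)=x\tan s$ have opposite signs (or both vanish); read $H(s)$ as $\pm\infty$ if $|s|=\pi/2$. Hence $u_c$ lies between $0$ and $s$, so $y_c=\mu-xu_c$ lies between $0$ and $\mu$, and $|y_c|\le\pi x/2<\pi/4$.

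Then $|F(u_c)|=|\tan y_c-y_c|/x\le\pi^3x^2/8$, which gives $|\theta_c-\theta_{\mathrm{asy}}|\le\pi^3x^2/(8a)$ for all $x<1/2$. For $x\ge 1/2$, your half-period bound $2\pi/a\le\pi^3x^2/a$ applies, with $\theta_c$ understood as the root of $\dot q$ in that fast half-period. Together these give the uniform statement, with a constant eight times better in the asymptotic regime.
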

	
	\begin{proof}
		Let $g(u) = x(\tan u + u)$. The exact condition $\dot{q}(\theta_c)=0$ is $x\tan u_c = \tan(\mu - xu_c)$. Expanding the right-hand side via Taylor's theorem yields $x\tan u_c = (\mu - xu_c) + \Delta u$, which simplifies to $g(u_c) = \mu + \Delta u$. Our approximation $u_{\text{asy}}$ satisfies $g(u_{\text{asy}}) = \mu$ by construction. By the Mean Value Theorem:
		\begin{equation}
			|u_c - u_{\text{asy}}| = \frac{|g(u_c) - g(u_{\text{asy}})|}{|g'(\xi)|} = \frac{|\Delta u|}{x(1 + \sec^2 \xi)} \le \frac{|\Delta u|}{2x},
		\end{equation}
		where $\xi \in ]u_c, u_{\text{asy}}[$ and $1+\sec^2\xi \ge 2$. Next, we bound the remainder $\Delta u$ using the Lagrange form $R_1 = \frac{1}{2}f''(\zeta)(xu)^2$ for $f(z)=\tan z$. With $|u| \le \pi/2$ and the evaluation point $|\zeta| \le |\mu| + x|u| \le \pi x$, we have:
		\begin{equation}
			|\Delta u| = |x^2 u^2 \sec^2 \zeta \tan \zeta| \le \pi^3 x^3,
		\end{equation}
		where we used $\tan(\pi x) \le 2\pi x$ and $\sec^2(\pi x) \le 2$ for small $x$. Combining these results yields $|u_c - u_{\text{asy}}| \le \frac{\pi^3 x^2}{2}$. Since $\theta = \frac{2}{a}(k\pi + u)$, the parameter error bound \eqref{eq:errorBound} is found.
		For $x \ll 1$, this $O(a^{-3})$ bound confirms the high precision of the asymptotic proxy.
	\end{proof}
	We note that the asymptotic reduction provides structural insight and a parameter-uniform approximation rather than computational complexity gains.
	However, in settings where such analytical guarantees are not required, a well-bracketed numerical method applied directly to $\dot{q}(\theta)=0$ can deliver the exact root to machine precision with comparable effort, especially when $x$ is not small or when the slow-mode localisation offers limited advantage.
	
	
	\section{Controller design}\label{sec:controllerDesing}
		
	From a control theoretical point of view, we can discern $(q,\dot q)$ as a host subsystem and $(z,\dot z)$ as an auxiliary/controller, connected through a power‑preserving port with storage function $H$ \cite{Juchem2024}. The inscribed radius bound $r_{res}$ furnishes an \emph{internal performance} certificate consistent with passivity and dissipativity theory \cite{vanDerSchaftL2,HaddadChellaboina,Willems2007,HillMoylan1976}.
	
	As mentioned in Section \ref{sec:subsystem}, consider the closed-loop class induced by shaping the constant gyroscopic interconnection strength $n\in\mathbb{R}$ (and, optionally, kinetic/potential scalings that preserve the quadratic Hamiltonian form), subject to the impulse disturbance class $\mathcal{D}$ in \eqref{eq:IC}.	With the above, we can obtain for each choice of $n$ and each disturbance in $\mathcal{D}$ the largest origin-centered circle with radius $r_{\mathrm{res}}(n;\dot{q}_0)$ inscribed in the \emph{projected} subsystem trajectory on $(q,\dot q)$ (or equivalently on $(z,\dot z)$ by symmetry).
	By linearity, $r_{\mathrm{res}}(n;\dot{q}_0)=|\dot{q}_0|\,r_{\mathrm{res}}(n;1).$
	Equivalently, $\min_{t\ge 0} H_q(t) = \frac{1}{2}\, r_{\mathrm{res}}(n;\dot{q}_0)^2.$
	
	Both the largest inscribed circle radius $r_{\mathrm{res}}$ and the estimated time to reach said minimum $T_{min}$ (for the first time) can be used to obtain an optimal controller design. Recall that  $T_{min}\approx\frac{\pi}{|n|}$, and in a resonant case $T_{min}=\sqrt{\tau\sigma}\theta_{c}$. Two design problems can be discerned:
	
	\begin{itemize}
	\item\textbf{Absorption with speed constraint:}
	\begin{equation}\label{eq:design-absorb}
		\min_{n\in\mathcal{N}}\;\; r_{\mathrm{res}}(n;D)
		\qquad\text{s.t.}\quad
		T_{\min}(n)\le T_{\max},
	\end{equation}
	and the optional exclusion $\Omega_1\!:\!\Omega_2\notin \mathcal{R}_{\mathrm{low}}$ avoids low‑order locking sets that inflate $r_{\mathrm{res}}$.
	
	\item\textbf{Containment with responsiveness: }
	\begin{equation}\label{eq:design-contain}
		\max_{n\in\mathcal{N}}\;\; r_{\mathrm{res}}(n;D)
		\qquad\text{s.t.}\quad
		T_{\min}(n)\le T_{\max}.
	\end{equation}	
	\end{itemize}
	
	The set $\{(r_{\mathrm{res}}(n;1),\,T_{\min}(n))\}_{n\in\mathcal{N}}$ defines a Pareto frontier. Low‑order locking points (large $n$) give fast but conservative exchange (large $r_{\mathrm{res}}$), whereas near‑irrational points (small $n$) give deep exchange (small $r_{\mathrm{res}}$) but slow attraction. This is clearly demonstrated in Figure \ref{fig:pareto}, where $r_{res}$ is plotted in function $\log{1+T_{min}}$ for different resonant pairs ($\tau, \sigma$). The red line shows a clear trade-off between both performance metrics.
	\begin{figure}[h!]
		\centering
		\includegraphics[width=0.75\columnwidth]{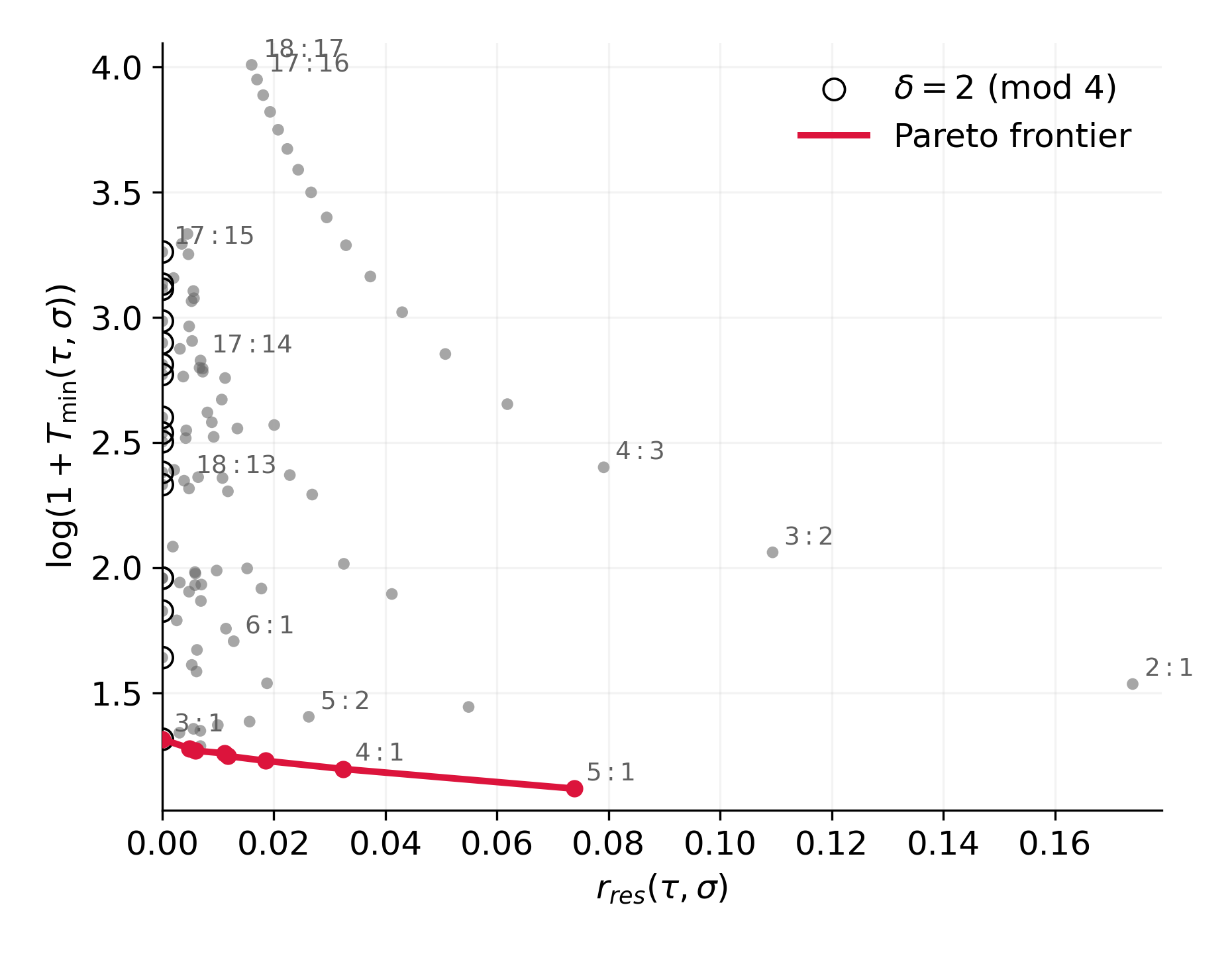}
		\caption{The Pareto frontier of speed vs. minimum energy. Schematic plot of $\big(r_{\mathrm{res}}(n(\tau, \sigma);1),\,T_{\min}(n(\tau,\sigma))\big)$ as $(\tau,\sigma)$ varies. Note the low‑order resonances (fast/large) and near‑irrational ratios (slow/small).}
		\label{fig:pareto}
	\end{figure}

	\section{Examples}
	\label{sec:examples}
	We consider the gyroscopically coupled, conservative 2-DOF system \eqref{eq:model} with an impulse disturbance \eqref{eq:IC} with $\dot q_0=1$. We present two examples where $n$ needs to be designed. Following prior absorber design practice, we require a pronounced beat:
	\begin{equation}\label{eq:beatqual}
	\frac{\tau+\sigma}{\tau-\sigma}\geq 10,
	\end{equation}
	so that one slow envelope period contains at least ten fast cycles.
	
	\subsection{Case A - Absorption}
	We target \(\delta\equiv2\ (\mathrm{mod}\ 4)\) to enable complete cancellation (zero inscribed radius) and preferably as fast as possible, as in \eqref{eq:design-absorb}.
	With $\delta=2$, \eqref{eq:beatqual} gives $\tau \geq 11$ and, thus, $\sigma\geq9$, and the pair (11, 9) is coprime. This gives $n = 0.201$ and $T_{min} \approx 15.6s$.
	With $\delta=6$, \eqref{eq:beatqual} gives $\tau \geq 33$, so the smallest coprime that agrees to these conditions is (41, 35), but $T_{min}\approx 19.8s$, so this is worst and will only deteriorate if we increase $\delta$ further (see Figure \ref{fig:pareto}). 
	So for $(\tau, \sigma)=(11, 9)$, the Lissajous trajectory in \((q,\dot q)\) is closed and intersects the origin, so \(r_{\min}=0\). Furthermore is the beat pronounced and is the time to achieve zero energy minimal. Figure \ref{fig:absorption_phase} shows the phase portrait with the closed, sparse Lissajous curve, the analytic convex boundary \eqref{eq:boundary}, and the largest origin-centered inscribed circle (degenerate).
	Figure \ref{fig:absorption_energy} shows \(H_q(t)\) with the predicted \(T_{\min}\) marker.
	
	\subsection{Case B - Containment}
	Here, we want to achieve design criterium \eqref{eq:design-contain}. We require resonance for a sparse trajectory but avoid \(\delta\equiv2\ (\mathrm{mod}\ 4)\). From Figure \ref{fig:pareto} it is clear that $\delta=1$ leads to large $r_{res}$.
	Then from \eqref{eq:beatqual} if follows that $\tau \geq 5.5$. As $\tau \in\mathbb{N}$ the coprime pair $(\tau, \sigma)=(6,5)$ is chosen.	Then $n= 0.183$, and $T_{\min}\approx 17.2s$.
	Here, the Lissajous trajectory is closed and remains away from the origin. To find the resonant inscribed radius, we find from \eqref{eq:asympProxy} that $u_{asy}=-0.17rad$, which gives $\theta_{asy}=3.30rad$. This is validated as $\dot{q}(\theta_{asy})=2.43\times 10^{-4}$, and \eqref{eq:errorBound} leads to $|\theta_c - \theta_{asy}|=1.16\times 10^{-4} \le 2.33\times 10^{-2}$. This yields $r_{\min}=5.075\times 10^{-2}$, i.e., a certified lower bound on $H_q$. 
	Figure~\ref{fig:containment_phase} shows the phase portrait with the trajectory, envelope, and the non-zero inscribed circle.
	Figure \ref{fig:containment_energy} shows \(H_q(t)\) and the \(T_{\min}\) marker.
	
	\begin{figure}[!t]
		\centering
		\begin{subfigure}{0.48\linewidth}
			\includegraphics[width=\linewidth]{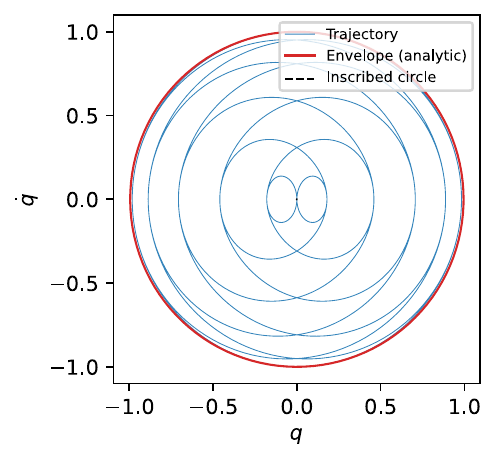}
			\caption{}
			\label{fig:absorption_phase}
		\end{subfigure}
		\hfill
		\begin{subfigure}{0.48\linewidth}
			\includegraphics[width=\linewidth]{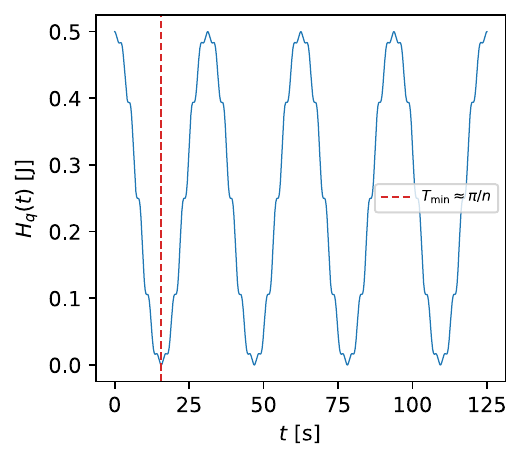}
			\caption{}
			\label{fig:absorption_energy}
		\end{subfigure}
		\caption{Absorption case $(11,9)$: (a) phase portrait with envelope and inscribed circle, and (b) corresponding subsystem energy. The minimum occurs at $T_{\min}\approx \pi/n$.}
		\label{fig:absorption}
	\end{figure}
	
	\begin{figure}[!t]
		\centering
		\begin{subfigure}{0.48\linewidth}
			\includegraphics[width=\linewidth]{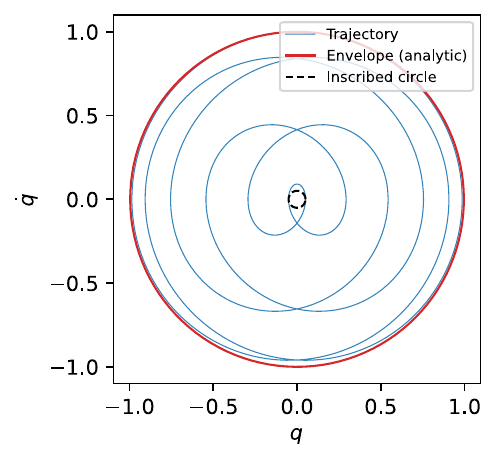}
			\caption{}
			\label{fig:containment_phase}
		\end{subfigure}
		\hfill
		\begin{subfigure}{0.48\linewidth}
			\includegraphics[width=\linewidth]{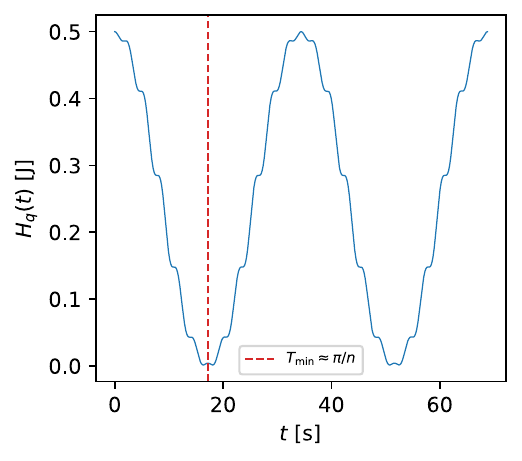}
			\caption{}
			\label{fig:containment_energy}
		\end{subfigure}
		\vspace{-1mm}
		\caption{Containment case $(6,5)$: (a) phase portrait with envelope and inscribed circle, showing a non-zero inscribed radius, and (b) the corresponding subsystem energy.}
		\label{fig:containment}
		\vspace{-2mm}
	\end{figure}
	
%
%
%
	
	\section{Conclusions}\label{sec:conclusions}
	We presented a geometric and computational analysis of energy exchange in 2-DOF gyroscopic systems, deriving exact envelopes, explicit resonance conditions, and introducing the resonant inscribed radius as a metric to quantify the minimal energy that remains confined to a subsystem for any resonant pair $(\tau, \sigma)$, which determines a unique coupling parameter $n$. Degeneracy of this metric occurs precisely when $\tau - \sigma \equiv 2 \pmod{4}$. In all other cases, we provided a numerical procedure to compute the resonant inscribed radius together with a uniform error bound. Finally, we connected these insights to a controller design framework for targeted energy absorption and containment within a subsystem, and illustrated the methodology through representative examples.
	
	\bibliographystyle{IEEEtran}
	\bibliography{biblio}
\end{document}